\newcolumntype{K}[1]{>{\centering\arraybackslash}p{#1}}
\newcommand{\xvec}{\mathbf{x}}
\newtheoremstyle{definition_new}% name of the style to be used
{0.7\topsep}% measure of space to leave above the theorem. E.g.: 3pt
{0.7\topsep}% measure of space to leave below the theorem. E.g.: 3pt
{\normalfont}% name of font to use in the body of the theorem
{0pt}% measure of space to indent
{\itshape}% name of head font
{}% punctuation between head and body
{7pt}% space after theorem head; " " = normal interword space
{}% Manually specify head
\theoremstyle{definition_new}
\newtheorem{thm}{Theorem}
\theoremstyle{remark}
\begin{document}
	
	%\preprint{APS/123-QED}
	
\title{Absence of chaos in Digital Memcomputing Machines with solutions}
	
	\author{Massimiliano Di Ventra}
	%\email{email: diventra@physics.ucsd.edu}
	\affiliation{Department of Physics, University of California, San Diego, La Jolla, 92093 CA}
	
	\author{Fabio L. Traversa}
	%\email{email: ftraversa@physics.ucsd.edu}
	\affiliation{Department of Physics, University of California, San Diego, La Jolla, 92093 CA}
	\affiliation{LoGate Computing, Inc.}

\begin{abstract}
Digital memcomputing machines (DMMs) are non-linear dynamical systems designed so that their equilibrium points are solutions of the Boolean problem they solve. 
In a previous work [Chaos {\bf 27}, 023107 (2017)] it was argued that when DMMs support solutions of the associated Boolean problem then strange attractors cannot coexist with 
such equilibria. In this work, we demonstrate such conjecture. In particular, we show that both topological transitivity and the strongest property of topological mixing are inconsistent with the point dissipative property of DMMs when equilibrium points are present.  This is true for both the whole phase space and the global attractor. Absence of topological transitivity is enough to imply 
absence of chaotic behavior. In a similar vein, we prove that if DMMs do not have equilibrium points, the only attractors present are invariant tori/periodic orbits with periods that may possibly increase with system size (quasi-attractors). 
\end{abstract}

\maketitle

%\pacs{Valid PACS appear here}
%\keywords{Memcomputing, stochastic dynamics} 

%\section{Introduction}

%{\bf Topological Transitivity: there exists a point xo in X whose orbit O(x0) =  F x o : n E W) is dense in X. (It then follows that O(x) is dense for all x in a dense Gs subset of X.)  A is dense in X if and only if the only closed subset of X containing A is X itself.  In compact metric spaces in general there is no connection between Devaney chaos and topological entropy. On one hand, Devaney chaos is a global characteristic of a dynamical system—a map having a closed invariant set A with nonempty interior and different from the whole space cannot be transitive REF: Top-chaos Francisco Balibrea a,1, L’ubomír Snoha}

% Foundations of synergetics II: Complex Patterns. Afraimovich and Shilnikov talk of quasi-attractors: stable periodic trajectories with extremely large periods and very small attraction basins. Can display apparantely chaotic 
%properties. Limit cycles should be called high-dimentional invariant tori. 
	
%{\bf I can prove no-transitivity, no-top-mixing. This implies no sensitivity to initial conditions? Not necessarily}\\

Memcomputing is a recently suggested computing paradigm that employs memory to both store and process information on the same physical location~\cite{diventra13a,UMM,traversaNP,DMM2}. It is based on the mathematical notion of 
Universal Memcomputing Machines introduced by us in Ref~\onlinecite{UMM}, which shows that these machines have the same computational power of non-deterministic Turing machines. As such, they can, in principle, solve Non-deterministic polynomial (NP)-complete problems with polynomial resources. 

In a subsequent work~\cite{DMM2}, we have shown that a physical and scalable realization of such machines --we call this subclass Digital Memcomputing Machines (DMMs)-- can be implemented using self-organizing logic gates (SOLGs). 
Unlike standard Boolean gates, SOLGs satisfy their logic proposition by also working ``in reverse'', in the sense that they are able to {\it adapt} to ``input'' signals from {\it any} terminal, without the distinction between input and output terminals as in conventional logic. 

From a physical point of view, these gates can be made as combination of standard circuit elements (including transistors) and/or memory elements such as memristors, memcapacitors and meminductors~\cite{09_memelements}. 
Irrespective of their practical realization (which incidentally is not even unique given the Boolean problem to solve) SOLGs and their combination into networks --self-organizing logic circuits-- are nothing other than non-linear 
dynamical systems, namely they can be mathematically described by an equation of motion of the type:
\begin{equation}
\dot \xvec(t) = F(\xvec(t)) \label{ODE},
\end{equation}
where $\xvec\in X $ is the set of state variables (such as voltages, currents and internal memory variables) belonging to the phase space $X$, and $F$ is the flow vector field representing the laws of temporal evolution of $\xvec$. 
This way, the solutions of the problem at hand can be found by driving the corresponding dynamical system to its equilibrium points. 

However, DMMs possess an additional, important feature that is not common to all dynamical systems described by Eq.~(\ref{ODE}): DMMs are {\it point dissipative}~\cite{hale_2010_asymptotic}. This mathematical property implies that there is 
always a bounded and compact set $J\subset X$ that attracts {\it every} point of $X$ under $F$.~\cite{hale_2010_asymptotic} This is indeed a critical feature for DMMs to succeed in solving complex problems efficiently since it implies that, irrespective of initial conditions, {\it all} system's orbits will eventually enter the attractor $J$ and remain there. 

However, such attractor could be the 
union of equilibrium points (the solutions of the Boolean problem to be solved), invariant tori (that include periodic and quasi-periodic orbits), and strange attractors. Formally, if $B_i$'s are the subsets of $J$ that attract the different solutions (stable fixed points, invariant tori and 
strange attractors), we can write $J=\bigcup_{B_i}\omega(B_i) $, with the $\omega$-limit sets $\omega(B_i)$ of $B_i$ the sets $\omega(B_i)={\textstyle\bigcap_{s\geq0}}\mathrm{Cl}{\textstyle\bigcup_{t\geq s}}T(t)B_i$ with $T:\mathbb{R}\times X\rightarrow X$, 
\begin{equation}
T(t)\xvec(0)=\xvec(0)+\int_0^tF(\xvec(t^{\prime}))dt^{\prime} \label{cr_semigroup}
\end{equation}
a $C^r$ semigroup with $r\geq 1$, and Lipschitz continuous. The fact that $J=\bigcup_{B_i}\omega(B_i) $ is a consequence of $J$ being uniformly asymptotically stable (namely it is stable and attracts a neighborhood of J)~\cite{hale_2010_asymptotic}. In addition, the Lipschitz continuity implies that $\omega(B_i) \cap \omega(B_j) = \emptyset$, $\forall B_i, B_j \subset X$. Note also that the phase space, $X$, of DMMs is just a metric compact space, not a Banach (vector) space. This means that Eq.~(\ref{ODE}) does not  
always have a fixed point~\cite{hale_2010_asymptotic}. 

In Ref.~\onlinecite{DMM2}, we have provided physical and mathematical arguments as to why, in the presence of equilibrium points, neither invariant tori nor strange attractors can coexist with equilibria. This conjecture is also supported by numerical simulations of DMMs. Our argument was that since the equilibrium points of DMMs belong to an open ball with finite radius in which the convergence rate to equilibrium is exponential, then if there are any orbits that evolve into a strange attractor or a limit cycle, they would eventually enter that ball and converge exponentially fast to one equilibrium point. 
However, we could not prove these statements rigorously. In this paper, we prove one of the above two statements, the one concerning chaos. 

Despite considerable research on chaotic dynamics, there is yet no agreed-upon definition of what constitutes chaos, so in order to prove its 
absence, one would be forced to prove the absence of most of its features~\cite{Igor}. However, there is a fairly general understanding, in both the mathematics and science 
communities, that deterministic chaotic dynamics 
is equivalent to these three properties~\cite{Devaney,newpaper} that need to be satisfied by an invariant subset $A$ of $X$ ($T(A)=A$): 
\emph{(i)} $T$ is {\it topologically transitive} in $A$,
\emph{(ii)} the set of periodic orbits of $T$ is dense in $A$, and \emph{(iii)} $T|_A$ depends sensitively on initial conditions. Topological transitivity means that, given any two points in the phase space, we can find an orbit that comes arbitrarily close to both. 

Mathematically, we say that the flow is topologically transitive~\cite{book-top-dyn} 
if for any pair of non-vacuous open sets $U$, $V$ $\subset A$, there exists a positive time $t'$ such that $T|_A(t',U)\cap V \neq \emptyset$, where $T|_A(t',U)=\{t' \in \mathbb{R}^+; \xvec(0) \in U|\; T|_A(t')\xvec(0):U\to A \}$~\footnote{For maps this is replaced by the requirement that there exists a positive $n$ such that $T|_A^{n}(U)\cap V \neq \emptyset$, where $T_A^n$ is the $n-$th iterate of the map restricted on $A$.}.  Property \emph{(iii)} is redundant since one can prove~\cite{Banks} that if the dynamical system satisfies \emph{(i)} and \emph{(ii)}, then \emph{(iii)} follows~\footnote{Since DMMs are not maps on intervals of $\mathbb{R}$, transitivity does not imply a dense set of periodic orbits~\cite{Vellekoop}.}. Incidentally, this means that only topological (global) features are enough to characterize a flow as 
chaotic, without the need of introducing a metric (local property). 

A stronger condition than transitivity is {\it topological mixing} which states that any open set of $A \subset X$ overlaps with any other open set if the former is propagated long enough. Mathematically, this means that for any pair of non-vacuous open sets $U$, $V$ $\subset A$, there exists a positive time $t'$ such that $T|_A(t\geq t',U)\cap V \neq \emptyset$, where $T|_A(t\geq t',U)=\{t'\leq t \in \mathbb{R}^+; \xvec(t') \in U| \; T|_A(t)\xvec(t'):U\to A \}$~\footnote{For maps this is replaced by the requirement that there exists a positive $n$ such that $T|_A^{m}(U)\cap V \neq \emptyset$, $\forall m\geq n$.}. Topological mixing implies transitivity, but not the other way around~\cite{Katok}. Note also that if $A=X$ then chaos spans the whole 
phase space.

If we take at face value the above properties as definitions of chaos, it is then sufficient to prove the absence of either {\it (i)} or {\it (ii)} to prove the original statement for our DMMs. We will focus on property {\it (i)}, namely topological transitivity. We can also go a step further and prove absence of 
topological mixing in DMMs with equilibrium points. We will then discuss which invariant dynamical properties DMMs show if there are no solutions to the Boolean problems they solve. 

{\it Absence of chaos with equilibrium points --} We consider a DMM described by the dynamical system $(X,T)$, with $X$ a metric compact space of dimension $D=\dim X$ and $T$ given in Eq.~(\ref{cr_semigroup}). We do not need to fix the topology of this space, but without loss of generality we may choose the Hausdorff topology. 
Our aim is then to prove the following: 
\begin{thm}
	\label{toptransitive} If $(X,T)$ is point dissipative and it has at least one equilibrium point, then it is not topologically transitive, and hence it is not topologically mixing, both in $X$ and in the global attractor $J$. 
\end{thm}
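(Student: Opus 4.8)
The plan is to reduce topological transitivity to an incompatible statement about invariant sets and then to exploit the special decomposition $J=\bigcup_i\omega(B_i)$ into \emph{disjoint} invariant pieces. First I would record the elementary criterion that a flow on a space $Y$ with more than one point cannot be topologically transitive if it admits a nonempty, forward-invariant open set $W$ whose closure is not all of $Y$. Indeed, take $U=W$ and $V=Y\setminus\mathrm{Cl}(W)$, which is nonempty and open because $\mathrm{Cl}(W)\neq Y$; forward invariance gives that the time-$t'$ image satisfies $T(t')(W)\subseteq W\subseteq\mathrm{Cl}(W)$, so $T|_Y(t',U)\cap V=\emptyset$ for every $t'$, contradicting the definition of transitivity stated above. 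Since mixing implies transitivity, the same $W$ simultaneously rules out mixing. Applying this with $Y=X$ and with $Y=J$ reduces the theorem to producing one such $W$ attached to the equilibrium point.

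Second, I would manufacture $W$ from the equilibrium. Let $\xvec_0$ be an equilibrium, $F(\xvec_0)=0$, so that $T(t)\xvec_0=\xvec_0$ and $\omega(\{\xvec_0\})=\{\xvec_0\}$; point dissipativity forces $\xvec_0\in J$. Using the attracting structure of the DMM solutions recalled in the introduction (the finite-radius ball of exponential convergence to equilibrium), $\xvec_0$ is Lyapunov stable, so every neighborhood of it contains a forward-invariant open neighborhood. Picking any $\yvec\in Y$ with $\yvec\neq\xvec_0$ and choosing such a $W\subset B(\xvec_0,\tfrac12 d(\xvec_0,\yvec))$ gives $\mathrm{Cl}(W)\neq Y$, completing the argument for that $Y$. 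For $Y=X$ this is immediate since $\dim X\geq 1$ forces $X\neq\{\xvec_0\}$; for $Y=J$ it applies whenever $J\neq\{\xvec_0\}$, the degenerate case $J=\{\xvec_0\}$ being a single-point global attractor for which no chaos question arises.

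Third, as a conceptually cleaner and complementary route, I would phrase the contradiction directly through $\omega$-limit sets and the stated disjointness. Topological transitivity on a compact metric $Y$ without isolated points yields a point $\pvec$ with dense forward orbit, and since any initial arc $\gamma^{[0,s]}(\pvec)$ is nowhere dense when $\dim Y\geq 1$, one obtains $\omega(\{\pvec\})=Y$. Then $\xvec_0\in Y=\omega(\{\pvec\})$ while also $\xvec_0\in\omega(\{\xvec_0\})=\{\xvec_0\}$, so these two $\omega$-limit sets meet at $\xvec_0$. This collides with the disjointness of distinct $\omega$-limit components in $J=\bigcup_i\omega(B_i)$ unless $\omega(\{\pvec\})=\{\xvec_0\}$, i.e.\ $Y=\{\xvec_0\}$, contradicting $|Y|\geq 2$; moreover, if $\xvec_0$ is isolated in $J$ the set $\{\xvec_0\}$ is already open and invariant, so the criterion of the first paragraph applies at once.

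I expect the main obstacle to be precisely the step that converts ``has an equilibrium'' into a genuine obstruction, namely guaranteeing the forward-invariant neighborhood $W$ (equivalently, that $\{\xvec_0\}$ is a stable, attracting $\omega$-limit component rather than a saddle embedded in a larger transitive set). A bare unstable equilibrium need not block transitivity, so the proof must lean on the point-dissipative, Lipschitz structure of DMMs that forces the decomposition of $J$ into \emph{disjoint} $\omega$-limit sets and makes each solution equilibrium locally attracting. Tying the elementary invariant-open-set criterion to this structural fact, uniformly for both $Y=X$ and $Y=J$, is where the real work lies; the transitivity-implies-mixing direction and the final non-mixing conclusion then follow for free.
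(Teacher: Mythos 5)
Your route is genuinely different from the paper's. The paper's entire argument is an appeal to one cited fact --- a closed invariant set with nonempty interior, properly contained in the space, precludes transitivity --- applied to the $\omega$-limit set of the equilibrium, first in $X$ and then in $J$. You instead prove an elementary criterion from scratch (a nonempty forward-invariant open set $W$ with $\mathrm{Cl}(W)\neq Y$ kills transitivity via $U=W$, $V=Y\setminus\mathrm{Cl}(W)$) and then manufacture $W$ from Lyapunov stability of the equilibrium, with a second independent route through dense orbits and the disjointness of the $\omega$-limit components of $J$. What your version buys is precision about interiors: the set the paper actually exhibits is $\omega(\{\xvec_0\})=\{\xvec_0\}$, a singleton, which has \emph{empty} interior whenever $\dim X\geq 1$, so the ``nonempty interior'' hypothesis of the result the paper invokes is not satisfied by the set the paper names; your open $W$ supplies exactly the missing interior. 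What it costs is an extra hypothesis: you need the equilibrium to be Lyapunov stable (equivalently, to admit a proper forward-invariant open neighborhood), which is not among the theorem's stated assumptions --- point dissipativity plus existence of an equilibrium --- and has to be imported from the DMM phenomenology of the introduction (the ball of exponential convergence). You flag this yourself, and the worry is substantive: a saddle equilibrium embedded in a transitive singular-hyperbolic attractor (the geometric Lorenz attractor on a trapping region is the standard example) is point dissipative, has an equilibrium, and is nonetheless topologically transitive on its global attractor, so some stability or isolation property of the equilibrium is genuinely needed for the conclusion on $J$. In short, you and the paper reach the same destination by different vehicles, and your write-up makes explicit the stability assumption that the paper's one-line proof leaves implicit; tightening either proof requires adding that assumption to the theorem or deriving it from the specific structure of DMM equilibria.
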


\begin{proof}
	The proof is simple. In fact, it is known that a dynamical system with a closed invariant set $A$ with nonempty interior, and different from the whole space $X$ cannot be transitive~\cite{compact}. If the dynamical system is point dissipative and has at least one equilibrium point then such a closed invariant set exists 
	--the $\omega$-limit set of that point--, it is not empty and cannot be equal to the whole space $X$: $A\subseteq J\subset X$. Since topological mixing 
	implies topological transitivity, the absence of the latter implies absence of the former. This result is valid even if we restrict the base space of the dynamical system to the global attractor $J$. The latter is compact, hence closed (and bounded). In the presence of an equilibrium point the (closed) $\omega$-limit set of that point belongs to $J$ and is not empty. Hence, either $J$ contains only the 
    $\omega$-limit set of the equilibrium point, or it is the union of such a set and some other set. Either way, the dynamical system cannot be 
	topologically transitive in $J$, hence it cannot be mixing. 
\end{proof}

%The next condition is stronger and can still be proved for our DMMs:
%\begin{thm}
%	\label{topmixing} If $(X,T)$ is point dissipative and it has at least one equilibrium point with an attracting region of positive measure, then it is not topologically mixing. 
%\end{thm}

%\begin{proof}
%	Since $(X,T)$ is point dissipative then there is a bounded and compact set $J\subset X$ which is the global attractor of the dynamical system and it is the union of $\omega$-limit sets of all possible long-time solutions of Eq~(\ref{cr_semigroup}):  $J=\bigcup_{B_i}\omega(B_i) $, with all possible intersections of all these $\omega$-limit sets empty. In addition, the basin of attraction of this equilibrium point is of positive measure. If this is true then we can always find a pair of non-vacuous open sets $U$, $V$ $\subset X$, such that for a time $t'$ $T(t\geq t',U)\cap V = \emptyset$. This means that the system cannot be topologically %mixing. 
%\end{proof}

If we take these features (in particular transitivity) as indications of the absence of strange attractors (chaos), we have indeed proved the conjecture advanced in Ref.~\onlinecite{DMM2} that in DMMs with solutions chaos {\it cannot emerge} during dynamics. 

{\it Absence of chaos without equilibrium points --} Let us now consider the case in which DMMs do {\it not} support equilibrium points: the Boolean problems represented by DMMs have no solutions. Then the compact metric nature of the support of the corresponding dynamical system $(X,T)$ implies the existence of at least one recurrent point~\cite{Devaney}. Recurrent points, other than fixed points, are (high-dimensional) invariant tori~\cite{Devaney}. This means that DMMs with no solution support at least one invariant torus. If this is the case, we can then prove the following theorem:

\begin{thm}
	\label{limit-cycle} If $(X,T)$ is point dissipative and it does not have equilibrium points, then it is not topologically transitive, and it is not topologically mixing in both $X$ and in the global attractor $J$. 
\end{thm}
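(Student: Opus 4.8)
The plan is to mirror the proof of Theorem~\ref{toptransitive}, replacing the equilibrium point by an invariant torus. First I would record the structural fact already established above: since $(X,T)$ has no equilibria and $X$ is compact, there exists at least one recurrent point, and a non-stationary recurrent point lies on an invariant torus $\mathcal{T}$ (a periodic orbit being the one-dimensional case)~\cite{Devaney}. Thus $\mathcal{T}$ is a nonempty, compact --- hence closed --- and invariant subset of $X$, contained in the global attractor, $\mathcal{T}\subseteq J$.

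The second step produces non-transitivity on $X$ by the same mechanism that underlies Theorem~\ref{toptransitive} and the lemma of Ref.~\cite{compact}. As a component $\omega(B_i)$ of the uniformly asymptotically stable attractor $J$~\cite{hale_2010_asymptotic}, the torus $\mathcal{T}$ attracts an open basin $B_i$, which is nonempty and positively invariant. Because any embedded torus has empty interior in the $D$-dimensional $X$ (by invariance of domain a compact torus cannot contain a $D$-ball), $\mathcal{T}$ is itself a nowhere-dense set, so the object with nonempty interior required by Ref.~\cite{compact} is not $\mathcal{T}$ but $\overline{B_i}$. The conclusion can also be read off directly: a forward orbit that ever enters $B_i$ is trapped there and converges to $\mathcal{T}$, so the closure of its forward trajectory is contained in the union of a one-dimensional curve and $\mathcal{T}$ and cannot be dense in $X$; an orbit that never enters $B_i$ remains for all forward time in the proper closed set $X\setminus B_i$. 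In neither case is there a dense orbit, so $(X,T)$ is not topologically transitive, and a fortiori not topologically mixing, on $X$.

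For the statement on the attractor $J$ I would distinguish two cases. If $\mathcal{T}\subsetneq J$, the same reasoning restricted to the base space $J$ yields non-transitivity and non-mixing on $J$. The delicate case --- and the step I expect to be the main obstacle --- is $J=\mathcal{T}$, a single torus carrying a minimal (irrational) flow: there the restricted flow is \emph{itself} topologically transitive, so the transitivity conclusion cannot survive unchanged and must be secured by excluding this configuration, for instance by showing that point dissipativity forces a nontrivial transient or unstable structure so that $J$ strictly contains $\mathcal{T}$, or that the admissible tori carry resonant (periodic) flow. What survives in every case is the absence of \emph{mixing}: a periodic or quasi-periodic flow on a torus is never topologically mixing~\cite{Katok}, so the mixing half of the theorem holds on $J$ unconditionally. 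Pinning down the transitivity half on $J$ against the minimal-torus scenario is the part I expect to require the most care.
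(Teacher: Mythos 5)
Your proposal follows the same route as the paper --- locate an invariant torus inside the global attractor and invoke the closed-invariant-set obstruction of Ref.~\onlinecite{compact} --- but you execute it more carefully, and in doing so you put your finger on the one step that neither you nor the paper actually closes. On $X$ your argument is sound: an embedded torus has empty interior in the $D$-dimensional phase space, so the obstruction must be applied to the closure of its open, positively invariant basin rather than to the torus itself, or, as you do directly, one observes that no forward orbit can be dense because it is either trapped in the basin or confined forever to its closed complement. This is a genuine repair: the paper's proof asserts that the $\omega$-limit set of the torus is ``a closed invariant set with nonempty interior,'' which is false for a torus embedded in $X$, exactly as you note. (One technical point: you rule out a dense forward orbit, whereas transitivity is defined here via open sets; on a compact metric space without isolated points the two formulations are equivalent by a Baire-category argument, and that equivalence should be invoked explicitly if you argue through orbits.)

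The genuine gap is the one you flag yourself: if the global attractor $J$ is a single invariant torus carrying a minimal (irrational) flow, the restriction $T|_J$ \emph{is} topologically transitive, so non-transitivity on $J$ cannot follow from this method, and your proposal does not supply the argument that excludes this configuration (nor one showing that the admissible tori of a point-dissipative DMM carry only resonant flow). What you prove unconditionally on $J$ is only the weaker non-mixing statement. Be aware that the paper's own proof does not resolve this case either: it passes over it via the same ``nonempty interior'' assertion, which inside $J$ is valid only when the torus is a proper clopen piece of $J$, i.e.\ precisely when $J\neq\mathcal{T}$. So as written your proposal establishes non-transitivity and non-mixing on $X$ and non-mixing on $J$, but leaves the transitivity claim on $J$ open in the minimal-torus case; to match the full statement you would need either to exclude $J=\mathcal{T}$ with minimal flow, or to weaken the conclusion on $J$ to non-mixing only.
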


\begin{proof}
	The proof of this theorem is analogous to the one for Theorem~\ref{toptransitive} with the only provision that the global attractor $J\subset X$ is now the union of $\omega$-limit sets of only invariant tori (at least one, since the 
	space $X$ is metric compact) and, possibly, strange attractors. Again, if this is true and one of these $\omega$-limit sets is an invariant torus, then this is a closed invariant set with nonempty interior and different from the whole space. Therefore, topological transitivity cannot occur in $X$. This implies that the system cannot be topologically mixing on $X$. Similarly, also the restriction of the flow to the global attractor $J$ precludes topological transitivity, hence mixing in $J$, since the global attractor contains the $\omega$-limit set of the invariant torus. 
\end{proof}

In other words, if the DMMs do not support solutions, they cannot manifest strange attractors (chaotic behavior), only invariant tori. However, the type and structure of these cycles is not obvious and they may indeed have periods that vary (increase) with 
the dimension of the phase space. 

In fact, as the size of the phase space increases we cannot exclude the existence of {\it quasi-attractors}, namely stable periodic trajectories with extremely large periods and very small attraction basins~\cite{Shilnikov}. If these 
quasi-attractors exist, then their features may be practically indistinguishable from true chaotic behavior. The reason is that the system may, due to some small fluctuations, jump from one of these long-period cycles to another, thus rendering the dynamics practically chaotic~\cite{Synergetics}. 

{\it Conclusions --} In summary, we have proved a conjecture advanced by us in Ref.~\onlinecite{DMM2} that in the presence of solutions of a given Boolean problem solved by DMMs, chaotic behavior cannot coexist with such dynamics. This is an 
important consequence of the point dissipative property of the dynamical systems representing these machines. Point dissipative means that there is always a bounded compact subset of the whole phase space that attracts {\it any} orbit 
of the system. This global attractor then fully characterizes the dynamics at sufficiently long times. 

We have also demonstrated that, in view of the (additional) compact metric nature of the dynamical systems describing DMMs, in the absence of equilibrium points, (high-dimensional) invariant tori are the only possible attractors of the system. However, we cannot exclude that such invariant tori are quasi-attractors. If that is the case, then their period and number could make them indistinguishable from 
chaotic behavior in numerical or experimental situations. 

In order to complete the theoretical program initiated in Ref.~\onlinecite{DMM2} 
we still need to prove the other conjecture: that in the presence of equilibrium points DMMs cannot support periodic orbits. We leave this (difficult) open problem for future work.

{\it Acknowledgments --} M.D. and F.L.T. acknowledge partial support from the Center for Memory Recording Research at UCSD and LoGate Computing, Inc.

\bibliographystyle{apsrev4-1}
\bibliography{SUSYref}

%merlin.mbs apsrev4-1.bst 2010-07-25 4.21a (PWD, AO, DPC) hacked
%Control: key (0)
%Control: author (72) initials jnrlst
%Control: editor formatted (1) identically to author
%Control: production of article title (-1) disabled
%Control: page (0) single
%Control: year (1) truncated
%Control: production of eprint (0) enabled
\begin{thebibliography}{19}%
\makeatletter
\providecommand \@ifxundefined [1]{%
 \@ifx{#1\undefined}
}%
\providecommand \@ifnum [1]{%
 \ifnum #1\expandafter \@firstoftwo
 \else \expandafter \@secondoftwo
 \fi
}%
\providecommand \@ifx [1]{%
 \ifx #1\expandafter \@firstoftwo
 \else \expandafter \@secondoftwo
 \fi
}%
\providecommand \natexlab [1]{#1}%
\providecommand \enquote  [1]{``#1''}%
\providecommand \bibnamefont  [1]{#1}%
\providecommand \bibfnamefont [1]{#1}%
\providecommand \citenamefont [1]{#1}%
\providecommand \href@noop [0]{\@secondoftwo}%
\providecommand \href [0]{\begingroup \@sanitize@url \@href}%
\providecommand \@href[1]{\@@startlink{#1}\@@href}%
\providecommand \@@href[1]{\endgroup#1\@@endlink}%
\providecommand \@sanitize@url [0]{\catcode `\\12\catcode `\$12\catcode
  `\&12\catcode `\#12\catcode `\^12\catcode `\_12\catcode `\%12\relax}%
\providecommand \@@startlink[1]{}%
\providecommand \@@endlink[0]{}%
\providecommand \url  [0]{\begingroup\@sanitize@url \@url }%
\providecommand \@url [1]{\endgroup\@href {#1}{\urlprefix }}%
\providecommand \urlprefix  [0]{URL }%
\providecommand \Eprint [0]{\href }%
\providecommand \doibase [0]{http://dx.doi.org/}%
\providecommand \selectlanguage [0]{\@gobble}%
\providecommand \bibinfo  [0]{\@secondoftwo}%
\providecommand \bibfield  [0]{\@secondoftwo}%
\providecommand \translation [1]{[#1]}%
\providecommand \BibitemOpen [0]{}%
\providecommand \bibitemStop [0]{}%
\providecommand \bibitemNoStop [0]{.\EOS\space}%
\providecommand \EOS [0]{\spacefactor3000\relax}%
\providecommand \BibitemShut  [1]{\csname bibitem#1\endcsname}%
\let\auto@bib@innerbib\@empty
%</preamble>
\bibitem [{\citenamefont {Di~Ventra}\ and\ \citenamefont
  {Pershin}(2013)}]{diventra13a}%
  \BibitemOpen
  \bibfield  {author} {\bibinfo {author} {\bibfnamefont {M.}~\bibnamefont
  {Di~Ventra}}\ and\ \bibinfo {author} {\bibfnamefont {Y.~V.}\ \bibnamefont
  {Pershin}},\ }\href@noop {} {\bibfield  {journal} {\bibinfo  {journal}
  {Nature Physics}\ }\textbf {\bibinfo {volume} {9}},\ \bibinfo {pages} {200}
  (\bibinfo {year} {2013})}\BibitemShut {NoStop}%
\bibitem [{\citenamefont {Traversa}\ and\ \citenamefont
  {Di~Ventra}(2015)}]{UMM}%
  \BibitemOpen
  \bibfield  {author} {\bibinfo {author} {\bibfnamefont {F.~L.}\ \bibnamefont
  {Traversa}}\ and\ \bibinfo {author} {\bibfnamefont {M.}~\bibnamefont
  {Di~Ventra}},\ }\href {\doibase 10.1109/TNNLS.2015.2391182} {\bibfield
  {journal} {\bibinfo  {journal} {IEEE Trans. Neural Netw. Learn. Syst.}\
  }\textbf {\bibinfo {volume} {26}},\ \bibinfo {pages} {2702} (\bibinfo {year}
  {2015})}\BibitemShut {NoStop}%
\bibitem [{\citenamefont {Traversa}\ \emph {et~al.}(2015)\citenamefont
  {Traversa}, \citenamefont {Ramella}, \citenamefont {Bonani},\ and\
  \citenamefont {Di~Ventra}}]{traversaNP}%
  \BibitemOpen
  \bibfield  {author} {\bibinfo {author} {\bibfnamefont {F.~L.}\ \bibnamefont
  {Traversa}}, \bibinfo {author} {\bibfnamefont {C.}~\bibnamefont {Ramella}},
  \bibinfo {author} {\bibfnamefont {F.}~\bibnamefont {Bonani}}, \ and\ \bibinfo
  {author} {\bibfnamefont {M.}~\bibnamefont {Di~Ventra}},\ }\href {\doibase
  10.1126/sciadv.1500031} {\bibfield  {journal} {\bibinfo  {journal} {Science
  Advances}\ }\textbf {\bibinfo {volume} {1}},\ \bibinfo {pages} {e1500031}
  (\bibinfo {year} {2015})}\BibitemShut {NoStop}%
\bibitem [{\citenamefont {Traversa}\ and\ \citenamefont
  {Di~Ventra}(2017)}]{DMM2}%
  \BibitemOpen
  \bibfield  {author} {\bibinfo {author} {\bibfnamefont {F.~L.}\ \bibnamefont
  {Traversa}}\ and\ \bibinfo {author} {\bibfnamefont {M.}~\bibnamefont
  {Di~Ventra}},\ }\href@noop {} {\bibfield  {journal} {\bibinfo  {journal}
  {Chaos: An Interdisciplinary Journal of Nonlinear Science}\ }\textbf
  {\bibinfo {volume} {27}},\ \bibinfo {pages} {023107} (\bibinfo {year}
  {2017})}\BibitemShut {NoStop}%
\bibitem [{\citenamefont {Di~Ventra}\ \emph {et~al.}(2009)\citenamefont
  {Di~Ventra}, \citenamefont {Pershin},\ and\ \citenamefont
  {Chua}}]{09_memelements}%
  \BibitemOpen
  \bibfield  {author} {\bibinfo {author} {\bibfnamefont {M.}~\bibnamefont
  {Di~Ventra}}, \bibinfo {author} {\bibfnamefont {Y.}~\bibnamefont {Pershin}},
  \ and\ \bibinfo {author} {\bibfnamefont {L.}~\bibnamefont {Chua}},\ }\href
  {\doibase 10.1109/JPROC.2009.2021077} {\bibfield  {journal} {\bibinfo
  {journal} {Proceedings of the IEEE}\ }\textbf {\bibinfo {volume} {97}},\
  \bibinfo {pages} {1717} (\bibinfo {year} {2009})}\BibitemShut {NoStop}%
\bibitem [{\citenamefont {Hale}(2010)}]{hale_2010_asymptotic}%
  \BibitemOpen
  \bibfield  {author} {\bibinfo {author} {\bibfnamefont {J.}~\bibnamefont
  {Hale}},\ }\href@noop {} {\emph {\bibinfo {title} {Asymptotic Behavior of
  Dissipative Systems}}},\ \bibinfo {edition} {2nd}\ ed.,\ \bibinfo {series}
  {Mathematical Surveys and Monographs}, Vol.~\bibinfo {volume} {25}\ (\bibinfo
   {publisher} {American Mathematical Society},\ \bibinfo {address}
  {Providence, Rhode Island},\ \bibinfo {year} {2010})\BibitemShut {NoStop}%
\bibitem [{\citenamefont {Ovchinnikov}\ and\ \citenamefont
  {Di~Ventra}(2017)}]{Igor}%
  \BibitemOpen
  \bibfield  {author} {\bibinfo {author} {\bibfnamefont {I.}~\bibnamefont
  {Ovchinnikov}}\ and\ \bibinfo {author} {\bibfnamefont {M.}~\bibnamefont
  {Di~Ventra}},\ }\href@noop {} {\bibfield  {journal} {\bibinfo  {journal}
  {arxiv:1702.06561}\ } (\bibinfo {year} {2017})}\BibitemShut {NoStop}%
\bibitem [{\citenamefont {Devaney}(1992)}]{Devaney}%
  \BibitemOpen
  \bibfield  {author} {\bibinfo {author} {\bibfnamefont {R.}~\bibnamefont
  {Devaney}},\ }\href@noop {} {\emph {\bibinfo {title} {A First Course in
  Chaotic Dynamical Systems: Theory and Experiment}}}\ (\bibinfo  {publisher}
  {Addison-Wesley},\ \bibinfo {year} {1992})\BibitemShut {NoStop}%
\bibitem [{\citenamefont {Aulbach}\ and\ \citenamefont
  {Kieninger}(2001)}]{newpaper}%
  \BibitemOpen
  \bibfield  {author} {\bibinfo {author} {\bibfnamefont {B.}~\bibnamefont
  {Aulbach}}\ and\ \bibinfo {author} {\bibfnamefont {B.}~\bibnamefont
  {Kieninger}},\ }\href@noop {} {\bibfield  {journal} {\bibinfo  {journal}
  {Nonlinear Dynamics and Systems Theory}\ }\textbf {\bibinfo {volume} {1}},\
  \bibinfo {pages} {23–37} (\bibinfo {year} {2001})}\BibitemShut {NoStop}%
\bibitem [{\citenamefont {Gottschalk}\ and\ \citenamefont
  {Hedlund}(1955)}]{book-top-dyn}%
  \BibitemOpen
  \bibfield  {author} {\bibinfo {author} {\bibfnamefont {W.}~\bibnamefont
  {Gottschalk}}\ and\ \bibinfo {author} {\bibfnamefont {G.}~\bibnamefont
  {Hedlund}},\ }\href@noop {} {\emph {\bibinfo {title} {Topological
  Dynamics}}}\ (\bibinfo  {publisher} {American Mathematical Society},\
  \bibinfo {year} {1955})\BibitemShut {NoStop}%
\bibitem [{Note1()}]{Note1}%
  \BibitemOpen
  \bibinfo {note} {For maps this is replaced by the requirement that there
  exists a positive $n$ such that $T|_A^{n}(U)\cap V \not =\emptyset $, where
  $T_A^n$ is the $n-$th iterate of the map restricted on $A$.}\BibitemShut
  {Stop}%
\bibitem [{\citenamefont {Banks}\ \emph {et~al.}(1992)\citenamefont {Banks},
  \citenamefont {Brooks}, \citenamefont {Cairns}, \citenamefont {Davis},\ and\
  \citenamefont {Stacey}}]{Banks}%
  \BibitemOpen
  \bibfield  {author} {\bibinfo {author} {\bibfnamefont {J.}~\bibnamefont
  {Banks}}, \bibinfo {author} {\bibfnamefont {J.}~\bibnamefont {Brooks}},
  \bibinfo {author} {\bibfnamefont {G.}~\bibnamefont {Cairns}}, \bibinfo
  {author} {\bibfnamefont {G.}~\bibnamefont {Davis}}, \ and\ \bibinfo {author}
  {\bibfnamefont {P.}~\bibnamefont {Stacey}},\ }\href@noop {} {\bibfield
  {journal} {\bibinfo  {journal} {The American Mathematical Monthly}\ }\textbf
  {\bibinfo {volume} {99}},\ \bibinfo {pages} {332} (\bibinfo {year}
  {1992})}\BibitemShut {NoStop}%
\bibitem [{Note2()}]{Note2}%
  \BibitemOpen
  \bibinfo {note} {Since DMMs are not maps on intervals of $\protect \mathbb
  {R}$, transitivity does not imply a dense set of periodic orbits~\cite
  {Vellekoop}.}\BibitemShut {Stop}%
\bibitem [{Note3()}]{Note3}%
  \BibitemOpen
  \bibinfo {note} {For maps this is replaced by the requirement that there
  exists a positive $n$ such that $T|_A^{m}(U)\cap V \not =\emptyset $,
  $\forall m\geq n$.}\BibitemShut {Stop}%
\bibitem [{\citenamefont {Hasselblatt}\ and\ \citenamefont
  {Katok}(2003)}]{Katok}%
  \BibitemOpen
  \bibfield  {author} {\bibinfo {author} {\bibfnamefont {B.}~\bibnamefont
  {Hasselblatt}}\ and\ \bibinfo {author} {\bibfnamefont {A.}~\bibnamefont
  {Katok}},\ }\href@noop {} {\emph {\bibinfo {title} {A First Course in
  Dynamics}}}\ (\bibinfo  {publisher} {Cambridge University Press},\ \bibinfo
  {year} {2003})\BibitemShut {NoStop}%
\bibitem [{\citenamefont {Balibrea}\ and\ \citenamefont
  {Snoha}(2003)}]{compact}%
  \BibitemOpen
  \bibfield  {author} {\bibinfo {author} {\bibfnamefont {F.}~\bibnamefont
  {Balibrea}}\ and\ \bibinfo {author} {\bibfnamefont {L.}~\bibnamefont
  {Snoha}},\ }\href@noop {} {\bibfield  {journal} {\bibinfo  {journal}
  {Topology and its Applications}\ }\textbf {\bibinfo {volume} {133}},\
  \bibinfo {pages} {225} (\bibinfo {year} {2003})}\BibitemShut {NoStop}%
\bibitem [{\citenamefont {Afraimovich}\ and\ \citenamefont
  {Shilnikov}(1991)}]{Shilnikov}%
  \BibitemOpen
  \bibfield  {author} {\bibinfo {author} {\bibfnamefont {V.}~\bibnamefont
  {Afraimovich}}\ and\ \bibinfo {author} {\bibfnamefont {L.}~\bibnamefont
  {Shilnikov}},\ }\href@noop {} {\bibfield  {journal} {\bibinfo  {journal}
  {Amer. Math. Soc. Transl.}\ }\textbf {\bibinfo {volume} {149}},\ \bibinfo
  {pages} {201} (\bibinfo {year} {1991})}\BibitemShut {NoStop}%
\bibitem [{\citenamefont {Mikhailov}\ and\ \citenamefont
  {Loskutov}(1996)}]{Synergetics}%
  \BibitemOpen
  \bibfield  {author} {\bibinfo {author} {\bibfnamefont {A.}~\bibnamefont
  {Mikhailov}}\ and\ \bibinfo {author} {\bibfnamefont {A.~Y.}\ \bibnamefont
  {Loskutov}},\ }\href@noop {} {\emph {\bibinfo {title} {Foundations of
  Synergetics II: Chaos and Noise}}}\ (\bibinfo  {publisher} {Springer},\
  \bibinfo {year} {1996})\BibitemShut {NoStop}%
\bibitem [{\citenamefont {Vellekoop}\ and\ \citenamefont
  {Berglund}(1994)}]{Vellekoop}%
  \BibitemOpen
  \bibfield  {author} {\bibinfo {author} {\bibfnamefont {M.}~\bibnamefont
  {Vellekoop}}\ and\ \bibinfo {author} {\bibfnamefont {R.}~\bibnamefont
  {Berglund}},\ }\href@noop {} {\bibfield  {journal} {\bibinfo  {journal} {The
  American Mathematical Monthly}\ }\textbf {\bibinfo {volume} {101}},\ \bibinfo
  {pages} {353} (\bibinfo {year} {1994})}\BibitemShut {NoStop}%
\end{thebibliography}%
\end{document}